\documentclass[a4paper,british]{article}

\usepackage{amsmath,amssymb,amsthm,bm,color,babel}
\usepackage[bookmarks,colorlinks,breaklinks]{hyperref}  

\renewcommand{\Phi}{\sigma}

\newtheorem{thm}{Theorem}[section] 
\newtheorem{prop}[thm]{Proposition}
\newtheorem{Def}[thm]{Definition}

\newcommand{\Tr}{{\rm Tr}}

\title{Quantum Geometry on Quantum Spacetime: 
Distance, Area and Volume Operators}
\author{D.~Bahns\thanks{Courant Research Center Mathematics, 
University of G\"ottingen,~--~Bunsenstr. 3-5, D - 37073 G\"ottingen. Email: 
{\tt bahns@uni-math.gwdg.de}~.},
S.~Doplicher\thanks{Dipartimento di Matematica, Universit\`a degli Studi di Roma ``Sapeinza'', P.le A.~Moro 2, 00185 Roma. Email: 
{\tt dopliche@mat.uniroma1.it}~.}, 
K.~Fredenhagen\thanks{II. Institut f\"ur Theoretische Physik, Universit\"at 
Hamburg, Luruper Chaussee 149, D-22761, Hamburg. E-mail: 
{\tt klaus.fredenhagen@desy.de}~.} and 
G.~Piacitelli\thanks{SISSA~--~Via Bonomea 256, 34136 Trieste, Italy. E-mail: 
{\tt gherardo@piacitelli.org}~.}}

\date{}
\begin{document}

\maketitle

\begin{abstract} 
We develop the first steps towards an analysis of geometry on the
quantum spacetime proposed in [1]. The homogeneous elements of the
universal differential algebra are naturally identified with operators
living in tensor powers of Quantum Spacetime;  this allows 
us to compute
their spectra. In particular, we consider operators that can be
interpreted as distances, areas, 3- and 4-volumes.

The Minkowski distance operator between two independent events is shown to have pure Lebesgue spectrum with infinite multiplicity.
The Euclidean distance operator is shown to have spectrum bounded below by 
a constant of the order of the Planck length. The corresponding 
statement is proved also for both the space-space and space-time area 
operators, as well as for the Euclidean length of the vector representing the 
3-volume operators. However, the space 3-volume operator (the time component 
of that vector) is shown to have spectrum equal to the whole complex plane. All 
these operators are normal, while the distance operators are also 
selfadjoint.

The Lorentz invariant spacetime volume operator, representing the 4-volume 
spanned by five independent events, is shown to be normal. Its spectrum is pure point with a 
finite distance (of the order of the fourth power of the Planck length) 
away from the origin. 

The mathematical formalism apt to these problems is developed and its 
relation to a general formulation of Gauge Theories on Quantum Spaces is 
outlined. As a byproduct, a Hodge Duality between the absolute 
differential and the Hochschild boundary is pointed out.
\end{abstract}

\newpage

\section{Introduction and preliminaries}
The concurrence of the principles of Quantum
Mechanics and of Classical General Relativity imposes limits on the joint precision allowed in 
the measurement of the four spacetime coordinates of an event, as a consequence of the 
following principle:
\begin{quote}
{\it The gravitational field generated by the concentration of energy  
required by the Heisenberg Un\-cer\-tainty Principle to localise an event
in
spacetime should not be so strong to hide the event itself to any distant
observer - distant compared to the Planck scale.}
\end{quote}
These limitations pose no restriction on the precision in the measurement 
of a 
{\itshape single} coordinate\footnote{This does not conflict with the famous 
Amati Ciafaloni Veneziano {\itshape Generalised Uncertainty 
Relation}: all the derivations  we are aware of (see e.g.\ \cite{Amati:1990xe,Mead:1964zz})
implicitly assume that {\itshape all} space 
coordinates of the event are measured with uncertainties of the same order of 
magnitude; in which case they agree with our Spacetime Uncertainty Relations.}, 
but lead to {\itshape Spacetime Uncertainty 
Relations} 
\begin{equation}
\label{dopleq1}
\Delta q^0 \cdot \sum \limits_{j = 1}^3 \Delta q^j \gtrsim  \lambda_P ^{2} ; 
\quad
\sum \limits_{1 \leq j <  k \leq 3 } \Delta q^j  \Delta q^k \gtrsim \lambda_P ^{2},
\end{equation}  
that were shown to be exactly implemented by Commutation Relations 
between coordinates, which turn spacetime into {\itshape Quantum Spacetime} 
\cite{Doplicher:1994tu, Doplicher:1994zv}.

More precisely, the four spacetime coordinates of an event are described 
by four operators which fulfil
\begin{equation}
\label{dopleq2}
[q^\mu  ,q^\nu  ]   =   i \lambda_P^2   Q^{\mu \nu}
\end{equation}
where $\lambda_P$ denotes the {\itshape Planck length}, 
\[
\lambda_P=\left({G\hbar\over
c^3}\right)^{1/2}\simeq1.6\times10^{-33}\text{ cm,}
\]
hereafter set 
equal to $1$ adopting  {\itshape absolute units},  where \( \hbar = c =
G  = 1\); $Q$ fulfills 
the {\itshape Quantum Conditions}
\begin{subequations}
\label{quantum_equations}
\begin{gather}
\label{dopleq3}
\left(1/4)[q^0  ,q^1,q^2,q^3 \right]^2  =  I,\\
 [q^\mu  ,q^\nu  ]   [q_{\mu}  ,q_{\nu}] = 0, \\
[[q^\mu  ,q^\nu  ], q^\lambda] = 0.
\end{gather}
\end{subequations}
Namely, the only full Lorentz invariant constructed with the commutator 
which is required to be nonzero is the square of the pseudoscalar
\begin{subequations}
\begin{align}
\nonumber
\left[q^0  ,q^1,q^2,q^3 \right]  &\equiv  \det \left(
\begin{array}{cccc}
q^0 & q^1&q^2 & q^3 \\
q^0 & q^1&q^2 & q^3 \\
q^0 & q^1&q^2 & q^3 \\
q^0 & q^1&q^2 & q^3 
\end{array}
\right)\\
\nonumber
&\equiv  \varepsilon_{\mu \nu \lambda \rho} q^\mu q^\nu q^\lambda
q^\rho =\\
&= - (1/2) Q^{\mu \nu}  (*Q)_{\mu \nu},
\label{dopleq4}
\end{align}
where 
\begin{equation}\label{eq:hodge_of_Q}
(*Q)_{\mu\nu}=\tfrac 12\varepsilon_{\mu \nu \lambda \rho}Q^{\lambda\rho}.
\end{equation}
\end{subequations}

In this model, the {\itshape C*-Algebra of Quantum Spacetime} $ \mathcal 
E$ is defined as the enveloping C*-Algebra of the Weyl form (selecting 
the {\itshape regular representations}\footnote{An irreducible 
representation of (1) is regular if and only if it generates an 
irreducible representation of $ \mathcal
E$, or, by the Dixmier - Nelson Theorem, if and only if the sum of the 
squares of the $q^\mu$ is essentially selfadjoint. In this paper we will 
deal only 
with such representations.}) of the commutation relations between 
the coordinates: 
\[
 e^{i\alpha_\mu q^\mu}e^{i\beta_\mu q^\nu}=e^{-(i/2)\alpha_\mu 
Q^{\mu\nu}\beta_\nu} 
e^{i(\alpha+\beta)_\mu q^\mu}\ ;\quad\alpha,\beta\in\mathbb R^4. 
\]

The unbounded operators \(q^\mu\) are {\itshape affiliated} to the 
C*-Algebra \(\mathcal E\) and fulfill the desired commutation relations. 
Poincar\'e covariance is expressed by an action $\tau$ of the full 
Poincar\'e group by automorphisms of $ \mathcal E$, determined by the 
property that its canonical extension to the $ q $ 's fulfil
\[
\tau_L ( q)  =  L^{-1} (q).
\]
The C*-Algebra $ \mathcal E$ turns out to be the C*-Algebra of continuous
functions vanishing at infinity from a manifold $\Sigma$ to the 
C*-Algebra of compact operators on the separable infinite dimensional Hilbert
space.

Here $\Sigma$ is the (maximal) {\itshape joint spectrum of the
commutators}, which is
the manifold of the real antisymmetric two-tensors fulfilling the
constraints imposed by the above {\itshape quantum conditions}; namely,  
specifying such a tensor by its electric and magnetic components $\vec e,
\vec m$, $\vec e{\,}^2=\vec m^2$, $\vec e\cdot\vec m= \pm 1$. Thus $\Sigma$
can be identified with the full Lorentz orbit of the standard symplectic 
form in four dimensions, that is $\Sigma$ is the union of two connected  
components, each homeomorphic to $SL(2,\mathbb C)/ {\mathbb C}_* $, or to
the tangent manifold $TS^2$ to the unit sphere in three dimensions. If 
$\vec e = \pm \vec m$ they must be of length one, and span the {\itshape
base} $\Sigma ^{(1)} $ of $\Sigma$. Thus $\Sigma$ can be viewed as $ T
\Sigma ^{(1)} $.

If we choose
$\sigma=(\vec{e},\vec{m})$ in $\Sigma ^{(1)} $  with $\vec{e}=\vec{m}=(1,0,0)$, the 
corresponding irreducible representation of the operators
$q^{\mu}$ can be realised on the Hilbert space $H \otimes H$, with  $ H = 
L^2(\mathbb{R})$ by 
\begin{subequations}
\label{basic_rep}
\begin{eqnarray} 
q^0 = Q \otimes I,\\ 
q^1 = P \otimes I,\\
q^2 = I \otimes Q,\\
q^3 = I \otimes P,
\end{eqnarray} 
\end{subequations}
where $Q$ is the operator of
multiplication by $s$ in $H = L^2(\mathbb{R}, ds)$ and $P$ is the operator 
$\frac{1}{i}\frac{\partial}{\partial s}$ on $H$.

To begin with, we examine the Minkowski square operator $q^{\mu}q_{\mu} $ in Quantum 
spacetime. Since it is a full Lorentz scalar, it can be computed at a fixed point $\sigma$ in 
$\Sigma$. Hence, choosing $\sigma$ as here above, we can compute $q^{\mu}q_{\mu} $ using that 
particular irreducible representation as
\[
q^{\mu}q_{\mu} = (Q^2 - P^2) \otimes I - I \otimes ((Q^2 + P^2).
\]
While $Q^2 + P^2$ has pure point spectrum, equal to ${2n + 1, n = 0, 1, 
2,\dots}$, $(1/2)(P^2 - Q^2)$ is the Hamiltonian of the anti-harmonic 
oscillator, which is known to be unitarily equivalent to the direct sum of 
two copies of $Q$ (or of $P$). For instance, a unitary $U$ changing $P 
\oplus 
P $ to 
$(1/2) (Q^2 - P^2)$ by its adjoint action can be given as
\[
(U(\psi_1 \oplus \psi_2)(s) = e^{(i/2)(-i \frac{\partial}{\partial s})^2 } 
e^{(i/4)s^2 } 
(1/{|s|^{1/2}})(\theta(s) \psi_1 (ln |s|) + \theta(- s) \psi_2 (ln |s|)).
\]  
Now $ a P + b I $ is unitarily equivalent to $P$ and to $Q$ for any real $a, b$, with $a$ 
nonzero;  
therefore $q^{\mu}q_{\mu} $ is unitarily equivalent to the direct sum of infinitely many copies of 
$Q$, in any irreducible representation, hence in any representation of  $ \mathcal E $.

The Euclidean square operator $\Sigma q_{\mu}^2$ has been analysed in \cite{Doplicher:1994tu}. In the above 
representation, it agrees with the operator,
\[
 (P^2 + Q^2) \otimes I + I \otimes (P^2 + Q^2),
\]
with pure point spectrum whose minimum is $2$, corresponding to a state optimally localised 
around the origin. It was shown there that this is the minimum of the spectrum over all 
representations, and can be attained only in pure states in irreducible representations 
associated to points in the base  $\Sigma ^{(1)} $ of $\Sigma$, or in states obtained from them 
by integration with a regular probability measure on $\Sigma ^{(1)}$.

These statements will extend immediately to the Minkowskian 
respectively Euclidean distance operators between two independent events.

In Quantum Mechanics the observables for the system composed of independent 
subsystems are described by tensor products of observables for the subsystems.
This suggests to describe the algebra of coordinates for $n$ independent 
spacetime events as the tensor product $ \mathcal E \otimes \mathcal E \otimes 
\dots \otimes \mathcal E$ ($n$ times). Thus the coordinate operators for the 
$j$-th event are 
\[
q_j  = I \otimes I \otimes\dots I \otimes q \otimes I \otimes\dots \otimes I,
(j\text{-th place}).
\]

It is however more natural to use not the standard tensor product over the 
complex numbers, but the {\itshape ${\mathcal Z}$-module tensor product}, 
where   $ {\mathcal Z} $ denotes the centre of the multiplier algebra of 
$\mathcal E$, that is the algebra of continuous bounded complex functions on $ 
\Sigma $. This amounts to require that  the commutators of the different spacetime components of $ q_j $ are independent of $j$, so that
\[
[q_{j}^{\mu} , q_{k}^{\nu}] = i Q^{\mu\nu} \delta _{j,k},
\] 
or, in the language of the next section, that
\[
d Q = 0.
\]
This choice has been discussed and motivated in \cite{Bahns:2003vb}, and implies in 
particular that the normalised difference variables
\[
\frac{1}{\sqrt{2}}(q_j - q_k )
\]
for distinct $j,k$ obey {\itshape the same} commutation relations \eqref{dopleq1}, and 
commute with the barycentre coordinates.

In particular, the  Minkowski, respectively Euclidean, distance operators, 
evaluated with these normalised difference variables for two independent 
events, will have exactly the same spectral properties as above described for 
the $q$s themselves, so that, in absolute units,
\[
\sum \limits_{\mu = 0}^3 (q_{j}^{\mu} - q_{k}^{\mu})^{2} \gtrsim 4.
\]

In the next section we will briefly discuss a variant of the Universal 
Differential Calculus where this formalism fits best, and the two distinct 
relevant algebraic structures. This is the natural mathematical ground for 
a general formulation of Gauge Theories on Quantum Spaces, briefly touched 
upon in the subsequent Section.

In the course of the discussion we will present a novel pairing between 
differential forms, the {\itshape q-pairing}, which yields a {\itshape 
duality} between the exterior differential and the Hochschild boundary, which 
makes of the latter a {\itshape codifferential}.

In this formalism, the exterior products of the $dq$'s will describe precisely 
the operators for the area, volume, four-volume spanned by respectively 3, 
4 or 5 
points describing independent events. 

The spectrum of each of these operators, however, has to be defined 
relative to another structure of *-Algebra, namely viewing the 
differential forms 
as operators affiliated to the
C*-Algebra direct sum of the C*-algebraic $\mathcal Z$-module tensor 
powers of $\mathcal E$.

A similar distinction of structures underlies the usual differential 
calculus.

The spectra of these operators will be fully computed in the last Section. The 
results confirm what might be anticipated on intuitive grounds, based on the 
uncertainty relations: if one space coordinate has a very small uncertainty 
$a$, then at least one other space coordinate and the time coordinate must have
an uncertainty $b$ such that $a b  \gtrsim 1$; this suggests that the space - 
space and the space - time areas ought to be bounded below by $1$, as well as 
the spacetime volume; while the space 3-volume (the time component of 
the vector representing the 3-volume operators) might be arbitrarily small, 
since,
in the above situation, the third space coordinate might well have an 
uncertainty of the same order $a$, so that the product of the three space 
uncertainties has the same arbitrarily small order; while the further product 
with the  time uncertainty, of order $b$, is again at least of order one. 
These 
arguments could be casted into a mathematical proof that the product of the 
four
spacetime uncertainties must be bounded below by order 
one as a consequence of (1).

It is worth noting that the restrictions we find here upon the spectra of the 
distance, area, and volume operators are just the {\itshape minimal} 
restrictions, 
imposed merely by the underlying geometry of our model of quantum Spacetime. 
In a realistic theory, the restrictions coming from dynamics together with the 
expected interconnections of spacetime and fields \cite{Doplicher:2001qt} might well 
impose tighter limitations.

\section{Independent events and the universal differential calculus}

Let $A$ be an associative algebra with unit over $\mathbb{C}$, obtained e.g. by adding the unit to an algebra $A_0$. Any tensor power $\Lambda^n (A):=A^{\otimes (n+1)}$ of $A$ over $\mathbb{C}$ can be viewed as an $A$-bimodule (using the 
product in $A$ on the first, resp. last factor), and the direct sum
\[ 
\Lambda (A)  = \bigoplus\limits_{n = 0}^\infty \Lambda _n (A) 
\] 
can be viewed as the $A$-bimodule tensor algebra\footnote{
For closely related notions, as 
the free product of two algebras, see \cite{Cuntz_martina_franca}.}, 
so that, in $\Lambda (A) $, 
\begin{equation}
(a_0 \otimes \dots \otimes a_n ) \cdot (b_0 \otimes \dots \otimes b_m) = a_0 
\otimes \dots \otimes a_n b_0 \otimes \dots \otimes b_m. 
\end{equation}
The exterior differential is defined on $\Lambda 
(A) $ by 
\begin{equation}
d (a_0 \otimes \dots \otimes a_n)  = \sum \limits_{k = 0}^{n+1} (-1)^k a_0 
\otimes \dots \otimes a_{k-1} \otimes I \otimes a_k\otimes \dots \otimes a_n . 
\end{equation}
As usual $d$ is a graded differential, i.e., if 
$\phi \in \Lambda^n (A), \psi \in \Lambda (A)$, we have 
\begin{align*} 
d^2 &= 0;\\
d (\phi\psi ) &= (d\phi )\psi + (-1)^n \phi d \psi. 
\end{align*} Note that $A = \Lambda^0 (A) 
\subset\Lambda (A)$, and the $d$-stable subalgebra $\Omega (A)$ of $\Lambda (A)$ generated by $A$ is the { \itshape universal differential 
algebra} (or the universal differential algebra of $A_0$ following 
\cite[{\S}1.\(\alpha\)]{connes}).

Observe that $\Omega (A)$ defined as above coincides with the usual notion of the universal differential algebra $\bigoplus_{n\geq 0} \Omega^n(A)$
with $\Omega^0(A):=A$, and 
\[
\Omega^n(A):=\Omega^1(A)\otimes_A \cdots \otimes_A \Omega^1(A) \qquad \mbox{($n$ times)}
\]

For $n=1,2,\dots$, consider the multiplication maps 
$m_{k}:\Lambda^{n+1}( A)\to\Lambda^{n}( A)$  
which linearly extend 
\[
m_k ( a_0 \otimes \dots \otimes a_n ) :=  a_0 \otimes \dots \otimes a_k a_{k+1} \otimes \dots \otimes a_n,
 \] 
where $k=0,\dots, n-1$. Then  $\Omega^n(A)$ coincides with the intersection of 
the kernels of all multiplication maps \cite{dubois-violette}.

On $\Lambda (A) $, we define an $A$-valued pairing, called the  {\itshape 
q-pairing}, by linearly extending
\[ 
\langle a_0 \otimes \dots \otimes a_n , b_0 \otimes \dots \otimes b_m\rangle  
:= \delta_ {n,m} \, a_0 b_0 \dots a_n b_n.
\] 

Observe that for any $a,b \in A$, $v \in \Lambda^n(A)$, $w \in \Lambda^m(A)$,
we have
\begin{equation}\label{eq:pairingMod}
\langle av ,  wb \rangle = a \, \langle v , w \rangle \, b 
\end{equation}

Moreover,
\begin{gather}
\langle da\,\omega,db\,\phi\rangle =[a,b]\langle\omega,\phi\rangle\\
\langle \omega\,da,\phi\,db\rangle =\langle\omega,\phi\rangle[a,b]
\end{gather}
for all  \(a,b\in A\), \(\omega,\phi\in\Lambda^n(A)\). It follows that,
when restricted to $\Omega(A)$, the $q$-pairing has some additional properties:
for any 
$a_i, b_i, a, b \in A$, and 
$\omega, d\psi \in \Omega^n(A)$, and $\phi , \lambda \in \Omega^m(A)$, we have
\begin{subequations}\label{eq:add-properties}
\begin{eqnarray}
\langle d a_1... da_n , d b_1 ... d b_n \rangle  &=& 
[a_1 , b_1 ] ... [a_n , b_n ] 
\\
\langle  \omega  \phi , (d \psi ) \lambda \rangle  &=& \langle  \omega  , d \psi \rangle  \langle  \phi , \lambda \rangle 
\\
\langle  \lambda d \psi ,  \phi \omega  \rangle & = & \langle  \lambda, \phi \rangle   \langle  d \psi , \omega \rangle 
\end{eqnarray}
\end{subequations}

The $A$-valued pairing $\langle \cdot, \cdot \rangle$ turns into an interesting  $\mathbb{C}$ - valued 
pairing by composition with a trace $\tau$. Namely, let $\tau$ be a complex valued 
linear map defined on a two sided ideal $J$, such that
\[ 
\tau (ab) = \tau (ba),\quad a \in A, b \in J;
\]
$\tau$ is {\itshape faithful} if $a$ in $A$ fulfills $\tau (ab) = 0$ for all $b$ in 
$J$ only if $a = 0$.

If $\Lambda _J (A)$ denotes the {\itshape differential ideal} in $\Lambda (A)$ 
generated by $J$, we have that $\Lambda _J (A)$ is the span of elements $ a_0 
\otimes \dots \otimes a_n $ where $a_j \in J$ for at least one $j$, and
\[
\langle \phi, \psi\rangle  \in J,\quad \phi \in \Lambda (A), \psi \in \Lambda _J (A).
\] 

Let $\delta$ denote the {\itshape Hochschild boundary} defined by 
\begin{align*}
\delta (a_0 \otimes \dots \otimes a_n)  
=& \sum \limits_{k = 0}^{n-1} (-1)^k a_0 
\otimes
\dots \otimes a_ {k-1} \otimes a_ {k}a_ {k+1} \otimes   a_{k+2} \otimes \dots \otimes 
a_n  +\\&+ (-1)^n a_n a_0 \otimes \dots \otimes a_{n-1}.
\end{align*}
Then we have

\begin{prop}
      The Hochschild boundary is a Hodge dual of the differential for the 
pairing $\tau ( \langle  \cdot , \cdot \rangle )$, namely
\[
\tau (\langle \delta \omega , \phi\rangle ) = \tau (\langle \omega , d \phi\rangle ), \omega , \phi \in \Lambda 
(A).
\]
\end{prop}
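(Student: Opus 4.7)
The strategy is a direct computation on elementary tensors, with the only nontrivial ingredient being the trace property of $\tau$, which takes care of a single ``wrap-around'' term. Since $d$, $\delta$, the pairing and $\tau$ are all linear, it is enough to verify the identity on tensors of the form $\omega=a_0\otimes\cdots\otimes a_n\in\Lambda^n(A)$ and $\phi=b_0\otimes\cdots\otimes b_m\in\Lambda^m(A)$. Because $\langle\cdot,\cdot\rangle$ vanishes unless the two arguments have equal degree, and because $d$ raises degree by $1$ while $\delta$ lowers it by $1$, both sides vanish unless $m=n-1$, in which case $d\phi\in\Lambda^n(A)$ and $\delta\omega\in\Lambda^{n-1}(A)$, so that both sides make sense as elements of $A$ before applying $\tau$.

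Next I would expand the right-hand side. From the definition of $d$, the $k$-th summand of $d\phi$ (for $0\le k\le n$) has $I$ inserted at slot $k$; pairing with $\omega$ gives
\[
\langle a_0\otimes\cdots\otimes a_n,\;b_0\otimes\cdots\otimes b_{k-1}\otimes I\otimes b_k\otimes\cdots\otimes b_{n-1}\rangle
=a_0b_0\cdots a_{k-1}b_{k-1}(a_ka_{k+1})b_k\cdots a_nb_{n-1}
\]
for $0\le k\le n-1$, and for $k=n$ it yields the ``boundary'' expression $a_0b_0\cdots a_{n-1}b_{n-1}a_n$. Expanding the left-hand side analogously, the $k$-th summand of $\delta\omega$ (for $0\le k\le n-1$) paired with $\phi$ produces exactly the same product $a_0b_0\cdots(a_ka_{k+1})b_k\cdots a_nb_{n-1}$, while the final summand $(-1)^n a_na_0\otimes a_1\otimes\cdots\otimes a_{n-1}$ paired with $\phi$ gives $(a_na_0)b_0 a_1b_1\cdots a_{n-1}b_{n-1}$.

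For each $k=0,1,\dots,n-1$ the two $k$-th contributions are identical in $A$, so their images under $\tau$ agree term by term, with matching signs $(-1)^k$. The only genuine difference is the last term: on the right one has $(-1)^n\tau(a_0b_0\cdots a_{n-1}b_{n-1}a_n)$, while on the left one has $(-1)^n\tau(a_na_0b_0\cdots a_{n-1}b_{n-1})$. This is precisely where the trace property $\tau(xy)=\tau(yx)$ is used: cyclically moving $a_n$ from the end to the front shows the two expressions coincide. Summing over $k$ with the alternating signs then yields the claimed equality $\tau(\langle\delta\omega,\phi\rangle)=\tau(\langle\omega,d\phi\rangle)$.

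The only place where care is needed is the matching of the boundary term with the $k=n$ term of $d\phi$; everything else is a bookkeeping exercise. One should verify that all the products written above lie in the ideal $J$ whenever $\omega$ or $\phi$ does (so that $\tau$ may be applied), but this is immediate since $J$ is two-sided, so for example if $\phi\in\Lambda_J(A)$ then every monomial appearing in $\langle\omega,d\phi\rangle$ contains a factor from $J$, and similarly for $\delta\omega$. No other subtlety arises, since the identity is purely algebraic at the level of $\Lambda(A)$ and does not require passage to $\Omega(A)$.
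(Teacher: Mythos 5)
Your computation is correct and is precisely the ``routine computation'' that the paper's proof leaves implicit: expanding $\langle\omega,d\phi\rangle$ and $\langle\delta\omega,\phi\rangle$ on elementary tensors, matching the interior terms $a_0b_0\cdots(a_ka_{k+1})b_k\cdots a_nb_{n-1}$ one-for-one with equal signs, and using the trace property only to identify the single wrap-around term $(-1)^n\tau(a_na_0b_0\cdots a_{n-1}b_{n-1})$ with $(-1)^n\tau(a_0b_0\cdots a_{n-1}b_{n-1}a_n)$. The degree bookkeeping ($m=n-1$) and the remark about landing in the ideal $J$ are both handled correctly.
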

\begin{proof}
A routine computation.
\end{proof}

Thus the associated Laplacian is $ ( d + \delta )^2 = d \delta + \delta d $; this 
operator, and the associated Hodge theory, has been studied by Cuntz and Quillen in 
the context of  Cyclic cohomology \cite{Cuntz_martina_franca}.

A similar conclusion as above might be drawn using, in place of a trace $\tau$, 
the 
{\itshape universal trace}, namely the quotient map of $A$ modulo the vector 
subspace spanned by the commutators \cite{cuntz_private}.

The pairing discussed here reduces exactly to zero in case of a commutative algebra, 
so it seems to describe an alternative to the classical formalism, possibly valid 
only in the purely quantum picture, rather than a deformation of the classical 
calculus, which reduces to it in the limit where the relevant parameter, as the 
Planck length, is neglected. This property is however fulfilled, in the case of the 
model of 
Quantum Spacetime described in the Introduction, by another pairing,
that we will mention here below.

Note first that the above formalism can be applied to $\mathcal E$, or to 
$\mathcal E$ embedded in its multiplier algebra with centre $\mathcal Z$; 
we will actually use it also for the affiliated unbounded operators as 
$q$. But it will be important for our application to use everywhere in the 
above formalism the $\mathcal Z$ -{\itshape module tensor product}, as 
anticipated in the introduction, that is \( dQ = 0\).
 We will still write $\Lambda _n (M {\mathcal E})$ for the $n$th 
$\mathcal Z$ - module tensor product.

In the language of the universal differential calculus, the difference operator for coordinates of two independent events as discussed in the previous section, i.e.
$
q_2-q_1= 1\otimes q - q\otimes 1 
$
is given as $dq$. Observe that this order 
makes sense since on a (commutative) vector space, \(q_2-q_1\) is 
the vector that connects \(q_1\) with \(q_2\), so it has the same sign as the 
tangent vector of a straight curve from \(q_1\) to \(q_2\).
Furthermore, the geometric operators describing area, volume and spacetime
volume on Quantum Spacetime spanned by the independent points will be
described by the exterior products of the $dq$'s, evaluated in $\Lambda 
(M{\mathcal E})$.

In this setting, we can introduce other pairings $( \cdot, \cdot)$ on $\Omega(\mathcal E)$ by choosing suitable elements $\mu$ in $\Lambda (\mathcal E)$, and setting 
\begin{equation}\label{eq:new-pairing}
( \phi , \psi )  = \tau ( \langle \phi \psi, \mu \rangle ) \delta _{n,m} ,
\quad\phi \in \Omega _n (\mathcal E) , \psi \in \Omega _m (\mathcal E) \ .
\end{equation}
Using the product of differential forms in $\Lambda(\mathcal E)$, we now consider a particular choice of $\mu$ which has a natural interpretation as a metric form. To this end, first observe that the $\nu$-component of spacetime derivations on Quantum Spacetime is given by $\text{iAd}\, \tilde q_\nu$, where $\tilde q_\nu= (Q^{-1} q)_{\nu}$. Now consider 
\begin{equation}\label{def:mu-metric}
\mu := \sum_n (d{\tilde q^{[\mu_1}}  d{\tilde q^{\mu_2}}\dots  d{\tilde q^{\mu_n]}}) (d{\tilde q_{[\mu_1}} d{\tilde q_{\mu_2}}\dots  d{\tilde q_{\mu_n]}}) 
\end{equation}
with totally antisymmetrized products in each of the brackets and Lorentz contraction of upper and lower indices. As we shall see in section 4, this metric operator, 
$\mu=(d\tilde q\wedge\dotsm\wedge d\tilde q)
(d\tilde q\wedge\dotsm\wedge d\tilde q)$ in the notation 
adopted there, can be understood in terms of distance, area and 
volume operators.

We then find for the corresponding pairing
\begin{equation}
(a_0 da_1 \dots da_n, db_1 \dots db_n b_{n+1}) = \tau (a_0 \partial _{[ \mu _1 
} a_1 \dots \partial _{ \mu _n ] } a_n \partial ^{[ \mu _1 } 
b_1 \dots \partial ^{ \mu _n ] } b_n b_{n+1}) , 
\end{equation}
which justifies the interpretation of $\mu$ as a metric form.

\section{Connection  and parallel transport}

In order to introduce gauge theories on noncommutative spaces, one usually 
starts from modules, interpreted as spaces of sections of a vector bundle with 
a noncommutative base. Connections for the universal differential calculus can 
be introduced as covariant derivatives, 
provided the module is 
projective\footnote{We are grateful to Jochen Zahn for drawing our attention to \cite[Prop 8.3]{GraciaBondia:2001tr}.}. We also introduce a 
concept of parallel 
transport, and by using the q-pairing, extended to module valued forms, we 
construct a linear map from the algebra to the endomorphisms of the module. 
This map corresponds to the transition from coordinates to covariant coordinates as  introduced in \cite{Madore:2000en} in a somewhat different context. 
Similar to the discussion there, 
the curvature of this map turns out to be related to the curvature of the 
connection. The map can be used to construct gauge invariant local functionals (''local observables'') of the theory. The exposition here extends an earlier publication~\cite{hesselberg}

Let $ H$ be a right module over $ A$. We set 
\begin{equation}
        \Lambda^{n}( A, H) := H\otimes_{ A}\Lambda^{n}( A)  =  H\otimes A^{\otimes n} \ ,
	        \label{module}
\end{equation}
which is a right $A$-module. For $\Phi \in \Lambda^{n}( A, H)$, $\omega \in \Lambda^m(A)$, we have
$\Phi\otimes_{ A}\omega \in \Lambda^{n+m}( A, H)$, so
$$
\Lambda( A, H):=\bigoplus_{n=0}^{\infty}\Lambda^{n}( A, H)
$$ 
is a right $\Lambda( A)$-module with respect to the action $\Phi\omega:= \Phi\otimes_{ A}\omega$. 
As on 
$\Lambda^{n}(A)$, we have the multiplication maps $m_{k}:\Lambda^{n}( A, H)\to\Lambda^{n-1}( A, H)$ for $k=0,\dots, n-1$, given by linearly extending 
($\Phi \in H$, $a_j \in A$)
\begin{eqnarray*}
m_{0}(\Phi\otimes\bigotimes_{i=1}^n a_{i})
&:=&\Phi a_{1}\otimes\bigotimes_{i=2}^n a_{i}
\\
m_k(\Phi\otimes\bigotimes_{i=1}^n a_{i})
&:=&\Phi \otimes a_{1} \otimes \cdots \otimes a_k a_{k+1}\otimes a_n \ 
\mbox{ for } 1\leq k \leq n-1
\end{eqnarray*}

The q-pairing has a straightforward extension to a 
pairing of $\Lambda^{n}( A, H)$ with $\Lambda^{n}( A)$ with values 
in $ H$, by
\begin{equation}\label{eq:q-pairing-ext}
	\langle \Phi\otimes a_{1}\otimes\cdot\cdot\otimes a_{n},b_{0}\otimes\cdot\cdot\otimes 
	b_{n}\rangle :=\Phi b_{0}\prod_{i=1}^n a_{i}b_{i} \ .
\end{equation}

We now consider the set of $H$-valued (universal) $n$-forms 
\[
\Omega^{n}( A, H) := H\otimes_A \Omega^n(A) \ , 
\]
which is a right $A$-module. For $\Phi \in \Omega^{n}( A, H)$, $\omega \in \Omega^m(A)$, we have $\Phi\otimes_{ A}\omega \in \Omega^{n+m}( A, H)$, so 
$$
\Omega( A, H):=\bigoplus_{n=0}^{\infty}\Omega^{n}( A, H)
$$ 
is an $\Omega(A)$-module with respect to the action $\Phi\omega:= \Phi\otimes_{ A}\omega$. In particular, we can identify $\Omega^1(A,H)\otimes_A \Omega^n(A)$ with $\Omega^{n+1}(A,H)$.

Once again, $\Omega( A, H)$ is the submodule of $\Lambda(A,H)$ 
which is given by the intersection of the kernels of the multiplication 
maps $m_k$.

A {\itshape universal covariant differential (a universal connection)} on $H$ is a linear map
\begin{equation}
        D: H\rightarrow \Omega^1(A,H)
        \label{connection}
\end{equation}
satisfying the Leibniz rule
\begin{equation}
        D(\Phi a)=(D\Phi)a+\Phi  d a \ .
        \label{covariant Leibniz}
\end{equation}
for all $\Phi \in H$, $a \in A$. 

Observe that a right $H$-module must be projective in order to admit for a 
universal connection \cite[Prop 8.3]{GraciaBondia:2001tr}.

If $D$ is a universal connection, it has an extension (still denoted by \(D\))
to $\Omega(A,H)$, which is uniquely fixed by the 
requirement
\[
D(\Phi \alpha) := (D\Phi ) \alpha + \Phi d\alpha
\]
for all $\Phi \in H$, $\alpha \in \Omega(A)$. Recall here that $\Phi \alpha = \Phi \otimes_A \alpha$, and that $(D\Phi ) \alpha = (D\Phi ) \otimes_A  \alpha$, and that we have $\Omega^m(A,H)\otimes_A \Omega^n(A)=\Omega^{m+n}(A,H)$.
For $\Phi \in \Omega^n(A,H)$, and $\alpha \in \Omega(A)$, we then have the graded Leibniz rule
\begin{equation}
D(\Phi\alpha) := (D\Phi) \alpha + (-1)^n  \Phi d\alpha
\end{equation}

We will now show that the map $D^2:\Omega(A,H) \rightarrow \Omega(A,H)$ is a right $\Omega(A)$-module homomorphism with $D^2:\Omega^n(A,H) \rightarrow \Omega^{n+2}(A,H)$.

To that end, let $\Phi\in\Omega^{n}(A,H)$ and $\psi\in\Omega( A)$. Then 
      from the graded Leibniz' rule and $ d^2=0$ we get
      \begin{eqnarray}\nonumber
        D^2(\Phi\psi) & = & D\left((D\Phi)\psi+(-1)^n\Phi 
         d\psi\right)\\\nonumber
         & = & (D^2\Phi)\psi+(-1)^{n+1}(D\Phi) d\psi+
                     (-1)^n(D\Phi) d\psi\\\label{D2OmegaMod}
         & = & (D^2\Phi)\psi.
      \end{eqnarray}

By analogy with differential geometry, $F:=D^2$ is called the curvature of the 
connection $D$.

Making use of the inclusion of $\Omega^n$ in $\Lambda^n$ 
and the $q$-pairing, we now introduce 
two new concepts: the parallel transport associated to covariant derivative,
and a right module map  which generalises the notion of the covariant coordinates introduced in \cite{Madore:2000en}.

Let $S: H \rightarrow \Lambda^1(A,H)$ be a linear map, then we associate a linear map $U: H \rightarrow \Lambda^1(A,H)$ to $S$ by 
\begin{equation}\label{def:Ugeneral}
U(\Phi) := S\Phi  + \Phi\otimes 1
\end{equation}

Given a universal connection $D: H \rightarrow \Omega^1(A,H)$, 
the linear map $U: H \rightarrow \Lambda^1(A,H)$ associated with \(D\) 
is called  the  {\itshape parallel transport} (along $D$).

A related notion of parallel transport has been introduced in the context of lattice gauge theory \cite{mueller-hoissen,pruestel}

\begin{prop}\label{prop:Umodulemap}
The linear map $U$ associated to a linear map $S: H \rightarrow \Lambda^1(A,H)$ 
in the sense of (\ref{def:Ugeneral}) is a right module map if and only $S$ satisfies the Leibniz rule w.r.t. the universal calculus.

\end{prop}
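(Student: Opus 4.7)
The plan is to unfold both conditions and show they coincide term-by-term. Let $\Phi\in H$ and $a\in A$. By the definition \eqref{def:Ugeneral} of $U$, I can write
\[
U(\Phi a)-U(\Phi)a = S(\Phi a)+(\Phi a)\otimes 1 - (S\Phi)a - (\Phi\otimes 1)a.
\]
Using that the right $A$-action on $\Lambda^1(A,H)=H\otimes A$ is on the rightmost factor, $(\Phi\otimes 1)a=\Phi\otimes a$, so the obstruction to $U$ being a right module map is
\[
U(\Phi a)-U(\Phi)a = S(\Phi a)-(S\Phi)a -\bigl(\Phi\otimes a-\Phi a\otimes 1\bigr).
\]

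The next step is to recognise the bracketed term as $\Phi\,da$ in $\Omega^1(A,H)$. Indeed, the universal $1$-form $da=1\otimes a-a\otimes 1\in\Omega^1(A)\subset\Lambda^1(A)$, and the right-module action of $\Lambda^1(A)$ on $\Lambda^0(A,H)=H$ is $\Phi\otimes_A(b\otimes c)=\Phi b\otimes c$. Therefore
\[
\Phi\,da =\Phi\otimes_A(1\otimes a) -\Phi\otimes_A(a\otimes 1) = \Phi\otimes a-\Phi a\otimes 1,
\]
so the displayed obstruction equals $S(\Phi a)-(S\Phi)a-\Phi\,da$. This vanishes for all $\Phi,a$ exactly when $S$ satisfies the Leibniz rule $S(\Phi a)=(S\Phi)a+\Phi\,da$ w.r.t.\ the universal differential calculus, proving both implications at once.

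The argument is essentially bookkeeping; the only point that requires care is the identification $\Phi a\otimes 1-\Phi\otimes a = -\Phi\,da$, which depends on tracking the $\otimes_A$ through the universal definition of $d$ and keeping the sign conventions straight. There is no genuine obstacle: once the right-module structure on $\Lambda^1(A,H)$ and the explicit formula for $da$ are used, the equivalence drops out immediately, so the whole proof fits in a few lines.
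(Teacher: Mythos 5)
Your proof is correct and is essentially the same computation as the paper's: both hinge on the identifications $(\Phi\otimes 1)a=\Phi\otimes a$ and $\Phi\,da=\Phi\otimes a-\Phi a\otimes 1$, with the only difference being that you package the two implications into a single ``obstruction'' identity while the paper writes out the two directions separately.
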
 

Observe that we do not assume that $S$ takes values in $\Omega^1(A,H)$.

\begin{proof} 
Let $S$ satisfy the Leibniz rule, then for $\Phi\in H$, and $a\in A$, we have
             \begin{eqnarray*}
                  U(\Phi a)&=&S(\Phi a)+\Phi a\otimes{\bm 1}
		  \ = \ 
		  (S\Phi)a+\Phi  d a + \Phi a\otimes{\bm 1}\\
                               &=&(U\Phi -\Phi\otimes{\bm 1})a 
			       +\Phi({\bm 1}\otimes 
                                 a-a\otimes{\bm 1})+\Phi a \otimes{\bm 1}\\
                                &=&(U\Phi)a     
                         \label{module homomorphism}
               \end{eqnarray*}
On the other hand, let $U$ be a right module map, then 
$
S(\Phi):=U(\Phi)-\Phi \otimes{\bm 1}
$ 
satisfies the Leibniz rule w.r.t. the universal calculus,
\[
S(\Phi a)=U(\Phi) a -\Phi a \otimes{\bm 1} = (S(\Phi) + \Phi \otimes {\bm 1}) a
-\Phi a \otimes{\bm 1} = S(\Phi)a + \Phi d a
\]

\end{proof}

If $D$ is a universal connection, the parallel transport $U$ along $D$ is a right module map (by the Proposition above) and moreover satisfies $m_0 \circ U = 1$, since $D$ takes values in $\Omega^1(A,H)$, so that $m_0 \circ D= 0$ whence $m_0\circ U (\Phi)= m_0(\Phi \otimes 1) = \Phi$. 
Conversely, if $U$ is a right module map that satisfies $m_0 \circ U= 1$, then 
$S :=U-\cdot \otimes  1$ 
is a universal connection.

It follows that the parallel transport splits the exact sequence 
      \begin{equation}
      0 \longrightarrow \Omega^1(A,H) \hookrightarrow H \otimes A \stackrel{m_0}{\longrightarrow} H \longrightarrow 0
      \end{equation}
This splitting map  was used to prove that a module admitting for a universal connection must be projective in \cite[Prop.\ 8.3]{GraciaBondia:2001tr}.

Using the notion of the parallel transport, we now extend $D$ to all of $\Lambda(A,H)$ in the same spirit as in the definition of the exterior differential on $\Lambda(A)$ in the previous section.

\begin{Def}\label{covariant differential on n-forms}
      On $\Lambda^{n}( A, H)$, the covariant differential is defined, as a 
      linear map into $\Lambda^{n+1}( A, H)$, by
      \begin{eqnarray}\nonumber
              D\Phi\otimes\bigotimes_{i=1}^n a_{i}&=&U\Phi\otimes\bigotimes_{i=1}^n a_{i}\\
            &  &+\sum_{k=1}^{n+1}(-1)^{k}\Phi\otimes\bigotimes_{i=1}^{k-1}a_{i}
              \otimes{\bm 1}\otimes\bigotimes_{i=k}^n a_{i} \ .
       \end{eqnarray}
\end{Def}

The map defined above satisfies the graded Leibniz rule,
\begin{equation}\label{eq:cov-Leibn}
              D(\Phi\psi)=(D\Phi)\psi+(-1)^n \Phi  d\psi \ , \ 
              \Phi\in\Lambda^{n}( A, H),\psi\in\Lambda( A)\ .
      \end{equation}
To see this,
      it suffices to prove the proposition for 
      $n=0$. Let $\psi=\bigotimes_{i=0}^m a_{i}$. Then $\Phi \psi=\Phi 
       a_{0}\otimes\bigotimes_{i=1}^m a_{i}$ and
       \begin{equation}
               D(\Phi\psi)=U\Phi a_{0}\otimes\bigotimes_{i=1}^m a_{i}+
               \sum_{k=1}^{m+1}(-1)^k\Phi a_{0}\otimes\bigotimes_{i=1}^{k-1}a_{i}
               \otimes{\bm 1}\otimes\bigotimes_{i=k}^m a_{i}
               \label{}
        \end{equation}
        On the other hand, 
        \begin{eqnarray*}
                (D\Phi)\psi&=&(U\Phi-\Phi\otimes{\bm 1})\psi \\
                &=&U\Phi a_{0}\otimes\bigotimes_{i=1}^m a_{i}- 
                \Phi\otimes\bigotimes_{i=0}^m a_{i}
        \end{eqnarray*}
        hence
        \begin{displaymath}
        	D(\Phi\psi)-(D\Phi)\psi=\Phi d\psi
        \end{displaymath}       
        which proves the assertion. 

By the same argument used for proving \eqref{D2OmegaMod}, we see that
$D^2:\Lambda( A,H)\rightarrow \Lambda( A,H)$ is a right $\Lambda( A)$ module homomorphism.

We now turn to the generalisation of covariant coordinates.

\begin{prop}
For any $a \in A$, 
\begin{equation}
L(a)\Phi:=\Phi a -\langle D\Phi, d a\rangle = 
\Phi a -\langle U\Phi, d a\rangle 
\end{equation}
is a right module map $L(a):H\rightarrow H$.
\end{prop}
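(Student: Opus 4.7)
The plan is to exploit two properties of the parallel transport $U$ that are already established in the text immediately preceding the statement: $U$ is a right $A$-module map (Proposition \ref{prop:Umodulemap}), and $m_0 \circ U = \mathrm{id}_H$, which holds because $D$ takes values in $\Omega^1(A,H) = \ker m_0$.

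First I would verify that the two given expressions for $L(a)\Phi$ coincide. Since $D\Phi = U\Phi - \Phi\otimes 1$, this reduces to the identity $\langle \Phi\otimes 1, da\rangle = 0$. Writing $da = 1\otimes a - a\otimes 1$ and applying the extended q-pairing \eqref{eq:q-pairing-ext}, both $\langle \Phi\otimes 1, 1\otimes a\rangle$ and $\langle \Phi\otimes 1, a\otimes 1\rangle$ evaluate to $\Phi a$, so they cancel.

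Next I would introduce coordinates: write $U\Phi = \sum_i \Phi_i\otimes a_i \in H\otimes A$. The crucial input is that $m_0(U\Phi) = \sum_i \Phi_i a_i = \Phi$. A direct expansion via \eqref{eq:q-pairing-ext} gives
\[
\langle U\Phi, da\rangle \;=\; \sum_i \bigl(\Phi_i a_i a - \Phi_i a\, a_i\bigr),
\]
hence
\[
L(a)\Phi \;=\; \Phi a - \sum_i \Phi_i a_i a + \sum_i \Phi_i a\, a_i \;=\; \sum_i \Phi_i a\, a_i,
\]
the first two terms cancelling via $\sum_i \Phi_i a_i = \Phi$. This telescoping collapse is the one non-obvious step of the proof, and recasts $L(a)$ in a form manifestly linear in the module action.

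Finally, to establish the right module property $L(a)(\Phi b) = (L(a)\Phi)\, b$, I would invoke that $U$ is a right module map to write $U(\Phi b) = (U\Phi) b = \sum_i \Phi_i \otimes a_i b$. Substituting this into the simplified formula from the previous step yields
\[
L(a)(\Phi b) \;=\; \sum_i \Phi_i a\, (a_i b) \;=\; \Bigl(\sum_i \Phi_i a\, a_i\Bigr) b \;=\; \bigl(L(a)\Phi\bigr)\, b,
\]
so that $L(a): H \to H$ is indeed a right $A$-module endomorphism. No serious obstacle arises: once the telescoped formula $L(a)\Phi = \sum_i \Phi_i a\, a_i$ is available, the module property is immediate from the module property of $U$.
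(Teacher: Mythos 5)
Your proof is correct, and it takes a recognisably different route from the paper's. The paper works directly with $D$: it applies the Leibniz rule $D(\Phi b)=(D\Phi)b+\Phi\,db$ and then the q-pairing rules $\langle (D\Phi)b,da\rangle=\langle D\Phi,da\rangle b$ and $\langle\Phi\,db,da\rangle=\Phi[b,a]$, so that the commutator terms cancel. You instead work with the parallel transport $U$ in explicit coordinates $U\Phi=\sum_i\Phi_i\otimes a_i$, and your two inputs are exactly the coordinate versions of the paper's two inputs: the module property of $U$ replaces the Leibniz rule for $D$ (these are equivalent by the earlier Proposition), and $m_0\circ U=1$, i.e.\ $\sum_i\Phi_i a_i=\Phi$, replaces the fact that $D\Phi$ lies in $\Omega^1(A,H)=\ker m_0$ --- which is precisely what makes the paper's identity $\langle (D\Phi)b,da\rangle=\langle D\Phi,da\rangle b$ valid, a dependence the paper leaves implicit. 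What your route buys is the closed formula $L(a)\Phi=\sum_i\Phi_i\,a\,a_i$ ("insert $a$ in the middle of the parallel transport"), which makes the right-module property immediate and foreshadows the later covariant-coordinate computation $L(q^\mu)(a)=(q^\mu+iQ^{\mu\nu}A^{(1)}_\nu A^{(2)}_\nu)a$; you also take care of the small point, omitted in the paper, that $\langle D\Phi,da\rangle=\langle U\Phi,da\rangle$ because $\langle\Phi\otimes 1,da\rangle=0$. The paper's argument is shorter and coordinate-free; yours is more explicit and makes the role of $m_0\circ U=1$ visible.
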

\begin{proof} By definition, we have $L(a)\Phi \in H$. To prove that $L(a)$ is a right module homomorphism, we first note that for $b \in A$,
$$
L(a)(\Phi b)=\Phi ba-\langle (D\Phi) b + \Phi  d b, d a\rangle \ .
$$
Now, according to the rules for the q-pairing, we get
$$
\langle (D\Phi) b, d a\rangle =\langle D\Phi, d a\rangle b
$$
and
$$
\langle \Phi  d b, d a\rangle =\Phi [b,a] \ .
$$
Hence $L(a)(\Phi b)=(L(a)\Phi)b$. 
\end{proof}

We now show that in the special case of the quantum spacetime algebra $\mathcal E$, the $L(q^\mu)$'s are the covariant coordinates of~\cite{Madore:2000en}. For a related discussion, see~\cite{Zahn:2006mg}.

Consider the algebra as a module over itself. Pick a covariant derivative, $Da=da+Aa$ where $A$ is a 1-form. Then for the quantum coordinates $q^\mu$, we find
\begin{eqnarray}\nonumber
L(q^\mu)(a) &=& aq^\mu - \langle Da , dq^\mu \rangle 
\ = \  aq^\mu - \langle da , dq^\mu \rangle + \langle Aa , dq^\mu \rangle 
\\
&=& aq^\mu - [a , q^\mu ] + \langle A , dq^\mu \rangle a
\ = \  q^\mu a + \langle A , dq^\mu \rangle a
\end{eqnarray}
where we have used the rules \eqref{eq:add-properties} for the $q$-pairing to pass from the first to the second line. Now, if $A= A^{(1)}_\nu dq^\nu A^{(2)}_\nu$ (in Sweedler's notation), we indeed find, 
\begin{equation}
L(q^\mu)(a) 
= (q^\mu  + i Q^{\mu\nu} A^{(1)}_\nu A^{(2)}_\nu ) \, a 
\end{equation}
by application of the rules \eqref{eq:add-properties}.

It should be noted that $L$ is not multiplicative. Instead we have:
\begin{prop}
Let $a,b, \in A$. Then the algebraic curvature 
\begin{equation}
R_{L}(a,b):=L(a)L(b)-L(ab)
\end{equation}
of \(L\) is related to the geometric curvature 
$F=D^2$ by
\begin{equation}
  	R_{L}(a,b)\Phi=\langle F\Phi, d a d b\rangle 
\end{equation}
for all $\Phi \in H$.
\end{prop}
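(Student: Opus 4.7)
The plan is to expand $R_L(a,b)\Phi = L(a)L(b)\Phi - L(ab)\Phi$ by direct computation using the definition of $L$, the graded Leibniz rules for $d$ and $D$, and the q-pairing identities \eqref{eq:pairingMod}--\eqref{eq:add-properties}, then match the result with $\langle F\Phi, da\,db\rangle$. First I would unfold $L(a)L(b)\Phi$ by applying $L(a)$ to $L(b)\Phi = \Phi b - \langle D\Phi, db\rangle \in H$ and using that $L(a)$ is a right module map together with the defining formula $L(a)\chi = \chi a - \langle D\chi, da\rangle$ applied to $\chi := \langle D\Phi, db\rangle$, obtaining
\[
L(a)L(b)\Phi = \Phi ab - \langle D\Phi, da\rangle\, b - \langle D\Phi, db\rangle\, a + \langle D\langle D\Phi, db\rangle, da\rangle.
\]
Expanding $L(ab)\Phi = \Phi ab - \langle D\Phi, d(ab)\rangle$ via the Leibniz rule $d(ab) = (da)b + a\,db$ and the identity $\langle D\Phi, (da)b\rangle = \langle D\Phi, da\rangle\, b$ (immediate from \eqref{eq:pairingMod}), subtraction leaves
\[
R_L(a,b)\Phi \;=\; \langle D\langle D\Phi, db\rangle, da\rangle \;-\; \langle D\Phi, db\rangle\, a \;+\; \langle D\Phi, a\,db\rangle, \qquad (\ast)
\]
an expression in which the three remaining terms must conspire to produce $\langle F\Phi, da\,db\rangle$.

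To see this, I would fix a representation $D\Phi = \sum_j \Psi_j\,dc_j$, available since every element of $\Omega^1(A,H) = H\otimes_A \Omega^1(A)$ is a finite sum of such tensors. From \eqref{eq:add-properties} I get $\langle D\Phi, db\rangle = \sum_j \Psi_j[c_j,b]$, and rewriting $a\,db = d(ab) - (da)b$ and pairing termwise yields $\langle D\Phi, a\,db\rangle = \sum_j \Psi_j\,a[c_j,b]$. Applying the graded Leibniz rule for $D$ to the $H$-element $\sum_j \Psi_j[c_j,b]$ and pairing with $da$, the summand $\sum_j \Psi_j\,d[c_j,b]$ contributes $\sum_j \Psi_j[[c_j,b],a]$ (via \eqref{eq:add-properties}) while the summand $\sum_j (D\Psi_j)[c_j,b]$ contributes $\sum_j \langle D\Psi_j, da\rangle[c_j,b]$. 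Substituted into $(\ast)$, the brackets $\Psi_j(a[c_j,b]-[c_j,b]a) = \Psi_j[a,[c_j,b]]$ cancel $\Psi_j[[c_j,b],a]$ by antisymmetry of the commutator, leaving
\[
R_L(a,b)\Phi \;=\; \sum_j \langle D\Psi_j, da\rangle\,[c_j,b].
\]

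Finally, since $d^2 c_j = 0$, the graded Leibniz rule gives $F\Phi = D^2\Phi = \sum_j (D\Psi_j)\,dc_j \in \Omega^2(A,H)$, and the splitting rule \eqref{eq:add-properties} for pairs of exact one-forms produces
\[
\langle F\Phi, da\,db\rangle \;=\; \sum_j \langle D\Psi_j, da\rangle\,\langle dc_j, db\rangle \;=\; \sum_j \langle D\Psi_j, da\rangle\,[c_j,b],
\]
matching the expression just derived and completing the proof. The main obstacle I expect is the term $\langle D\Phi, a\,db\rangle$ in $(\ast)$: none of the identities in \eqref{eq:pairingMod}--\eqref{eq:add-properties} directly handles a left $A$-multiplier sitting inside the second argument of the q-pairing, so the hinge of the calculation is the little trick $a\,db = d(ab) - (da)b$ that reduces it to pairings involving only exact one-forms. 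Everything else is careful sign and commutator bookkeeping.
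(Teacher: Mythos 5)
Your proof is correct and follows essentially the same route as the paper: both expand $L(a)L(b)\Phi - L(ab)\Phi$ using a representation $D\Phi=\sum_j\Psi_j\,dc_j$, the Leibniz identity $d(ab)=(da)b+a\,db$, and the q-pairing rules, arriving at $\langle D^2\Phi, da\,db\rangle$. The only cosmetic difference is that the paper applies the right-module property of $L(a)$ term by term to $\sum_j\Psi_j\langle dc_j,db\rangle$, which sidesteps your intermediate term $\langle D\langle D\Phi,db\rangle,da\rangle$ and the ensuing cancellation of the double commutators.
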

\begin{proof} 
Let $D\Phi=\sum\Phi_{i} d c_{i}$. Then
\begin{displaymath}
	L(b)\Phi=\Phi b -\sum\Phi_{i}\langle d c_{i}, d b\rangle
\end{displaymath}
and
\begin{displaymath}
	L(a)L(b)\Phi=(\Phi a-\langle D\Phi, d a\rangle )b 
	-\sum(\Phi_{i}a-\langle D\Phi_{i}, d a\rangle)\langle d c_{i}, d b\rangle \ .
\end{displaymath}
We have $\sum\Phi_{i}a\langle d c_{i}, d b\rangle=\sum\Phi_{i}\langle d 
c_{i},a d b\rangle = \langle D\Phi,a d b\rangle$ and $\langle D\Phi_{i}, d a\rangle
\langle d c_{i}, d b\rangle=\langle D^2\Phi, d a d b\rangle$, hence
\begin{displaymath}
	L(a)L(b)\Phi=L(ab)\Phi + \langle F\Phi, d a d b\rangle \ .
\end{displaymath}
\end{proof}
Gauge invariant quantities are now obtained in terms of a trace 
on the algebra of endomorphisms of $ H$. For instance, the evaluation 
of the $a,b$-component of the field 
strength smeared with a ``test function'' represented by some suitable 
element $c$ of $ A$ may be defined as
\begin{equation}
\Tr L(c)R_{L}(a,b).
\end{equation}
Therefore, we can interpret the above expressions
as providing noncommutative analogues of {\itshape local} gauge invariant quantities.

\section{The spectrum of geometric operators on Quantum Spacetime}

We will now use the concepts developed in the previous section to define volume, area and distance operators associated with the 
coordinates $q^\mu$ uniquely associated with the quantum spacetime algebra 
\(\mathcal E\), as described in the introduction. 
To define their spectra we regard 
the homogeneous elements of the universal differential 
algebra as elements of tensor powers (of \(\mathcal Z\)-modules) 
of the quantum spacetime.

We will make use of the noncommutative analogue of the wedge product: For any two tensors $A=(A^{ \mu_1 \dots \mu_n})$,  $B=(B^{ \nu_1 \dots \nu_m})$ with entries from a noncommutative algebra, we define the tensor
\begin{equation}
A \wedge B = (A^{[\mu_1 \dots \mu_n } B^{ \nu_1 \dots \nu_m]})
\end{equation}
where the brackets denote total antisymmetrisation as usual. 
In the highest rank case, we will consider, unless otherwise stated, the 
component 
$
(A \wedge B)^{0123}
$
and denote it (also) by $A\wedge B$.

\subsection{The four-volume operator}

We start by defining the 4-volume operator as an element of $\Lambda^4(\mathcal E)$, 
\begin{equation} \label{somelabel}
V =  dq \wedge dq \wedge dq \wedge dq = 
\epsilon_{\mu\nu\rho\sigma}dq^{\mu}dq^{\nu}dq^{\rho}dq^{\sigma} \ ,
\end{equation} 
so that the operator $V$ is an element of the 5th tensor power. 
It is crucial to observe that, while the product operation 
used in \eqref{somelabel} is defined
in  \(\Lambda(\mathcal E)\), the spectrum of \(V\) has to be computed in
the completed tensor product \(\mathcal E^{\otimes 5}\), 
equipped with its own C*-algebra structure.  
We interpret \(V\) as a function of 5 mutually commuting independent quantum 
coordinates $q^{\mu}_j=1^{\otimes j-1}\otimes q^{\mu}\otimes 1^{\otimes 5-j-1}$, $j=1,\ldots,5$, denoting the vertices of a 4 dimensional simplex. 

We now represent 
$V$ as an operator on the 5th tensor power of the 
representation space of the standard representation and analyse its spectrum.
The explicit form of $V$ is
\begin{equation}
V=\epsilon_{\mu\nu\rho\sigma}(q^{\mu}_2-q^{\mu}_1)(q^{\mu}_3-q^{\mu}_2)(q^{\mu}_4-q^{\mu}_3)(q^{\sigma}_5-q^{\sigma}_4) \ .
\end{equation}
We expand this product into a sum of products of coordinates,
\begin{equation}
V=\sum \pm \epsilon_{\mu\nu\rho\sigma}q_{j_1}^{\mu}q_{j_2}^{\nu}q_{j_3}^{\rho}q_{j_4}^{\sigma}
\end{equation}
where the sum runs over all choices of $j_1,\ldots j_4\in\{1,\ldots,5\}$ with $j_k\in\{k,k+1\}$,
and where the overall sign is $(-1)^{\sum_k j_k}$.
We may decompose the sum into three parts corresponding to the number of coincidences in the indices  $j_k$. 
The only term with 2 coincidences corresponds to
\begin{equation}
(j_1,j_2,j_3,j_4)=(2,2,4,4)
\end{equation}
This term is
\begin{equation}
\epsilon_{\mu\nu\rho\sigma} q_2^{\mu}q_2^{\nu}q_4^{\rho}q_4^{\sigma} =\frac14\epsilon_{\mu\nu\rho\sigma}[q_2^{\mu},q_2^{\nu}][q_4^{\rho},q_4^{\sigma}] \ .
\end{equation}
We performed the tensor product over the centre, which means that the commutators with different lower indices can be identified, and find
\begin{equation}
\label{QwQ}
=-\frac14 Q\wedge Q=-2\eta
\end{equation}
where $\eta$ is the central element of the algebra \(M(\mathcal E)\) of Quantum Spacetime describing the orientation (with spectrum $\pm1$).
 
 The part $A$ with no coincidences corresponds to the 5 subsets of $\{1,\ldots,5\}$ with 4 elements,
 \begin{equation}
A=\sum_{k=1}^5 (-1)^k A_k \ ,\ A_k=\bigwedge_{j\ne k}q_j \ .
\end{equation}
The part with 1 coincidence corresponds to the 10 subsets of $\{1,\ldots,5\}$ with 2 elements
\begin{equation}
B=\sum_{i<k}(-1)^{i+k}B_{ik} \ ,\ B_{ik}=i\frac12 Q\wedge q_i\wedge q_k \ .
\end{equation}

We obtain the decomposition
\begin{equation}\label{eq:decomposition}
V=A+B-2\eta \ .
\end{equation}
Here $A$ and $\eta $ are selfadjoint and $B$ skewadjoint. We now prove, that $A$ and $B$ commute, i.e. $V$ is normal.

We have
\begin{equation}
[q_k^{\mu},B_{ij}]=Q^{\mu\cdot}\wedge Q\wedge(\delta_{ik}q_j-\delta_{jk}q_i) \ .\end{equation}
For any 4-vector $a$ it holds
\begin{equation}
Q^{\mu\cdot}\wedge Q\wedge a=\epsilon_{\nu\lambda\rho\sigma}Q^{\mu\nu}Q^{\lambda\rho}a^{\sigma}
=2(Q^{\mu\nu}(\ast Q)_{\nu\sigma})a^{\sigma}
\end{equation}
where \(*Q\) denotes the Hodge dual of the commutator tensor 
\eqref{eq:hodge_of_Q}.

But by the relations defining quantum spacetime we have
\begin{equation}\label{eq:hodge_Q}
Q^{\mu\nu}(\ast Q)_{\nu\sigma}=\frac14 Q^{\nu\rho}(\ast Q)_{\nu\rho}\delta^{\mu}{}_{\sigma}=\eta \delta^{\mu}{}_{\sigma}
\end{equation}
hence we arrive at
\begin{equation}
[q_k^{\mu},B_{ij}]=\eta(\delta_{ik}q_j^{\mu}-\delta_{jk}q_i^{\mu}) \ .
\end{equation}
Thus $\mathrm{ad}B_{ij}$ acts on the coordinates $q_k$ as $\eta$ times the $ij$-generator of the Lie algebra of $\mathrm{SO}(5)$.
It follows that $\mathrm{ad}B$ generates a 1-parameter subgroup of $\mathrm{SO}(5)$.
Now 
\begin{equation}
\mathrm{ad}B(\sum_{k=1}^5 q_k)=\eta \sum_{i<j}\sum_k (-1)^{i+j}(\delta_{ik}q_j-\delta_{jk}q_i)=\eta\sum_{i<j}(-1)^{i+j}(q_j-q_i)=0 \ ,
\end{equation}
hence this 1-parameter subgroup is contained in the stabiliser $N$ of the vector $(1,1,1,1,1)^{T}$ in the fundamental representation of $\mathrm{SO}(5)$.

Now $A$ can be written in the form \footnote{Note that in our matrices
  here and in the following only entries on the same row may fail to
commute with each other; as a consequence, we can develop
indifferently by rows or by columns.}
\begin{equation}\label{eq:det1}
A=\sum_j(-1)^j\bigwedge_{i\ne j}q_i = \mathrm{det}
\left(
\begin{array}{cccccc}
 1 & q_1^0  & q_1^1&q_1^2& q_1^3 \\
 1 & q_2^0  & q_2^1&q_2^2& q_2^3 \\
 1 & q_3^0  & q_3^1&q_3^2& q_3^3 \\
 1 & q_4^0  & q_4^1&q_4^2& q_4^3 \\
 1 & q_5^0  & q_5^1&q_5^2& q_5^3
\end{array}
\right)
\end{equation}
If we multiply the matrix on the right hand side from the left by a $\mathrm{SO}(5)$-matrix $R$,
the left hand side does not change. If $R\in N$, the right hand side of (\ref{eq:det1}) assumes the form
\begin{equation}
\text{det}
\left(
\begin{array}{ccccc}
 1 & {q'}_1^0& {q'}_1^1& {q'}_1^2& {q'}_1^3  \\
 1 & {q'}_2^0& {q'}_2^1& {q'}_2^2& {q'}_2^3  \\
 1 & {q'}_3^0& {q'}_3^1& {q'}_3^2& {q'}_3^3  \\
 1 & {q'}_4^0& {q'}_4^1& {q'}_4^2& {q'}_4^3  \\
 1 & {q'}_5^0& {q'}_5^1& {q'}_5^2& {q'}_5^3  \\
 
\end{array}
\right)
\end{equation}
where ${q'}_j=R_j{}^kq_k$. Hence $A$ commutes with all linear combinations of $B_{ij}$ corresponding to elements of the Lie algebra of $N$. In particular, $A$ commutes with $B$.

Since $V$ is invariant under proper Lorentz transformations and changes sign under parity, we 
may evaluate it on a point $\sigma\in\Sigma$ of the spectrum of $Q$. We choose the representation \eqref{basic_rep} where
$\sigma=(\vec{e},\vec{m})$ with $\vec{e}=\vec{m}=(1,0,0)$. 
We represent the operators 
$q_j^{\mu}$ on the Hilbert space 
$H^{\otimes 5}\otimes H^{\otimes 5}$ by
\begin{subequations}
\begin{align} 
q_k^0 &= Q_k  \otimes I,\\ 
q_k^1 &= P_k  \otimes I,\\
q_k^2 &= I        \otimes Q_k,\\
q_k^3 &=I        \otimes P_k,
\end{align} 
\end{subequations}
where the index $k$ refers to the tensor factor in $H^{\otimes 5}$.

In this representation, $A$ assumes the form
\begin{align}\nonumber
A=&\mathrm{det}
\left(
\begin{array}{ccccc}
 1 & Q_1\otimes I & P_1 \otimes I& I \otimes Q_1 & I \otimes P_1  \\
 1 & Q_2\otimes I & P_2 \otimes I& I \otimes Q_2 & I \otimes P_2  \\
 1 & Q_3\otimes I & P_3 \otimes I& I \otimes Q_3 & I \otimes P_3  \\
 1 & Q_4\otimes I & P_4 \otimes I& I \otimes Q_4 & I \otimes P_4  \\
 1 & Q_5\otimes I & P_5 \otimes I& I \otimes Q_5 & I \otimes P_5
\end{array}
\right) =\\
\label{eq:det2}
=&\frac14 \sum_{i,j,k,l,m=1}^5 \epsilon_{ijklm}M_{jk}\otimes M_{lm}
\end{align}
where $M_{jk}=Q_{j}P_{k}-Q_{k}P_{j}$ is the generator of rotations in in the $(jk)$-plane of  $\mathbb{R}^5$. We see this by developing first the determinant along the first column,
in terms of the determinants of the 4 x 4 complementary minors, and then
developing each of those in terms of the determinants of their 2 x 2
minors from the two first columns time the determinants of their 2 x 2
complementary minors.

We may now multiply the matrix in (\ref{eq:det2}) from the left with an $\mathrm{SO}(5)$ matrix, such that the column vector $(1,1,1,1,1)^T$ is mapped onto the vector 
$(0,0,0,0,\sqrt{5})^T$. This will not change the operator $A$.  The evaluation of the determinant gives
\begin{equation}
A=\frac{\sqrt{5}}{4} \sum_{j,k,l,m=1}^4 \epsilon_{jklm}M'_{jk}\otimes M'_{lm} \ ,
\end{equation}
where $M'_{jk}$ is the generator of rotations in the $(jk)$-plane in the transformed coordinates, $j,k=1,\ldots,4$.

Let now $\vec{B}=(M'_{23},M'_{31},M'_{12})$, $\vec{D}=(M'_{14},M'_{24},M'_{34})$.
Then $\vec{L}^{\pm}=\frac12(\vec{B}\pm\vec{D})$ are mutually commuting generators of $\mathrm{SU}(2)$, in terms of which $A$ can be written as
\begin{equation}
A=2\sqrt{5}(\vec{L}^+\otimes \vec{L}^+ -\vec{L}^-\otimes \vec{L}^-)
\end{equation}
The spectrum of $A$ can now be obtained from an analysis of the representations of $\mathrm{SO}(4)$ in terms of the representations of the two commuting $\mathrm{SU}(2)$'s. They are labeled by pairs of spins $(j_+,j_-)$ where $j_++j_-$ must be integer. We thus obtain
\begin{equation}
\mathrm{spec}(A)\subset \sqrt{5}\mathbb{Z} \ .
\end{equation}

In view of (\ref{eq:decomposition}) and the fact that $B$ is skewadjoint and commutes with $A$ we find that the spectrum of $V$ has at least a distance $\sqrt{5}-2$ from the origin.

For a full determination of the spectrum of $V$ we now determine the spectrum of $B$. $\eta B$ is a representative of the Lie algebra element
\begin{equation}
b=
\left(
\begin{array}{rrrrr}
 0 & -1  & 1  & -1  & 1 \\
 1 &  0  & -1 &  1  &-1 \\
 -1&  1  &  0 & -1  & 1 \\
 1 &  -1 & 1  &  0  & -1\\
 -1&  1  & -1 &  1 & 1 \\
\end{array}
\right)
\end{equation}
of the Lie algebra
$\mathfrak{so}(5)$ of $\mathrm{SO}(5)$ under the natural representation  on 
$H^{\otimes 5}\otimes H^{\otimes 5}$.
The characteristic polynomial of $b$ is
\begin{equation}
\mathrm{det}(b-\lambda 1)=(-\lambda)^5+10(-\lambda)^3+5(-\lambda)
\end{equation}
with the roots
\begin{equation}\label{eigen}
\lambda_1=0\ ,\quad\lambda_{2,3}=\pm i \sqrt{5-{2\sqrt{5}}},\quad \lambda_{4,5}=\pm i\sqrt{5+{2\sqrt{5}}} \ . 
\end{equation}
We conclude that  $B$ has pure point spectrum with eigenvalues of the form 
$n\lambda_2+m\lambda_4$ with integers $n$ and $m$. Since $\lambda_4/\lambda_2=\sqrt{5}+2$ is irrational, the eigenvalues are dense in $i\mathbb{R}$, hence $\mathrm{spec}(B)=i\mathbb{R}$.

The point spectrum of the volume operator $V$ is thus contained in the set
\begin{equation}\label{spectrum}
S=\pm 2 +\mathbb{Z}\sqrt{5}+
i\left(\mathbb{Z}\sqrt{5-{2\sqrt{5}}} +\mathbb{Z}\sqrt{5+{2\sqrt{5}}}\;\right) \ .
\end{equation}
It is easy to see, that actually all points of $S$ belong to the 
point spectrum 
of $V$.  Namely, the eigenvalue $a$ of $A$ is determined by fixing the eigenvalues for $|\vec{L}^{\pm}\otimes I|^2$,  $|I\otimes \vec{L}^{\pm}|^2$ and $|\vec{L}^{\pm}\otimes I+I\otimes\vec{L}^{\pm}|^2$. If we call the corresponding spin quantum numbers $j_1^{\pm}$, $j_2^{\pm}$ and $j_1^{\pm}+j_2^{\pm}-m^{\pm}$ with $m^{\pm}=0,1,\ldots,2\,\mathrm{min}(j_1^{\pm},j_2^{\pm})$, we obtain
\begin{align}\nonumber
a=&\sqrt{5}\big((j_1^{+}+j_2^{+}-m^{+})(j_1^{+}+j_2^{+}-m^{+}+1)-j_1^{+}(j_1^{+}+1)-j_2^{+}(j_2^{+}+1)+\\
\nonumber
&-(j_1^{-}+j_2^{-}-m^{-})(j_1^{-}+j_2^{-}-m^{-}+1)+j_1^{-}(j_1^{-}+1)+j_2^{-}(j_2^{-}+1)\big)\\
\nonumber
=&\sqrt{5} \big(2j_1^{+}j_2^{+}-2m^{+}(j_1^{+}+j_2^{+})+m^{+}(m^{+}-1)-2j_1^{-}j_2^{-}+\\
&+2m^{-}(j_1^{-}+j_2^{-})-m^{-}(m^{-}-1)\big)
\end{align}
We now choose $j_2^{\pm}=\frac12$. Then $m^{\pm}=0,1$.
We set $m^{\pm}=0$ and obtain $a=\sqrt{5}(j_1^{+}-j_1^{-})$, thus all integer multiples of $\sqrt{5}$ occur as eigenvalues of $A$.

In order to obtain the eigenvalues of $B$ in a given eigenspace of $A$ we determine the decomposition of the matrix $b$ into the linear combination $b=\alpha_{+}b^{+}+\alpha_{-}b^{-}$ of two standard generators $b^{+},b^{-}$ of the two commuting $\mathfrak{su}(2)$ sub-Lie-algebras in the fundamental representation of $\mathfrak{so}(4)$ in the 4 dimensional subspace of $\mathbb{R}^5$ orthogonal to $(1,1,1,1,1)$. Up to a unitary transformation, these generators are given by
\begin{equation}
b^{\pm}=\frac12(M'_{23}\pm M'_{14})
\end{equation}
and they satisfy the relations
\begin{equation}
(b^{\pm})^2=-\frac14 , \ , b^{+}b^{-}=b^{-}b^{+}=:\Gamma \ , \Gamma^2=1 \ .
\end{equation}
The coefficients $\alpha_{\pm}$ can be determined from the characteristic equation
\begin{equation}
b^4+10b^2+5=0 \ .
\end{equation}
Inserting the decomposition of $b$ and using the fact that the even powers of $b$ are linear combinations of 1 and $\Gamma$, we find 
\begin{equation}
\alpha_+^2+\alpha_-^2=20
\end{equation}
and
\begin{equation}
\frac{(\alpha_+^2+\alpha_-^2)^2}{16}+ \frac{\alpha_+^2\alpha_-^2}{4}-\frac{5}{2}(\alpha_+^{^2}+\alpha_-^{2})+5=0 
\end{equation}
with the solution 
\begin{equation}
\alpha_{\pm}^2=10\pm{2\sqrt{5}} \ .
\end{equation}
In a representation of $\mathfrak{so}(4)$ with spin quantum numbers $(j^{+},j^{-})$ with $j^{+}+j^{-}\in\mathbb{Z}$ the representative of $b$ thus assumes the eigenvalues
\begin{equation}
i(k^{+}\alpha_{+}+k^{-}\alpha_-)
\end{equation}
with $k^{\pm}=j^{\pm}-l^{\pm}$, $l^{\pm}=0,1,\ldots, 2j^{\pm}$.
In terms of the eigenvalues $\lambda_2$ and $\lambda_4$ of $b$ (see (\ref{eigen}),
\begin{equation}
\lambda_{2,4}=i\sqrt{5\pm{2\sqrt{5}}}
\end{equation}
using
\begin{equation}
\frac{i}{2}(\alpha_+\pm\alpha_-)=\sqrt{5\pm{2\sqrt{5}}}
\end{equation}
we find
\begin{equation}
i(k^{+}\alpha_{+}+k^{-}\alpha_-)=i(k^{+}+k^{-})\lambda_4+i(k^{+}-k^{-})\lambda_2\ .
\end{equation}
We see that every linear combination of $\lambda_2$ and $\lambda_4$ with integer coefficients can be obtained by an appropriate choice of $j^{\pm}$ and $l^{\pm}$. 
We conclude that the set $S$ in (\ref{spectrum}) is the point spectrum of 
$V$.
 
The multiplicity of this point spectrum is uniformly infinite. For our 
operator commutes with the group of joint translations of the five vertices, 
and the relevant  representation of this  group 
has no non zero finite dimensional subrepresentations.

\subsection{The three-volume operators}

The four components of the {\itshape 3-volume operator} can be expressed, in 
the formalism used above, as
\begin{equation}
V_\sigma=\epsilon_{\mu\nu\rho\sigma}dq^\mu dq^\nu dq^\rho =A_\sigma+iB_\sigma
\end{equation}
where
\begin{subequations}
\begin{align}
A_\sigma&=\frac 16
\mathrm{det}
\left(
\begin{array}{cccc}
1&q_1^\mu&q_1^\nu&q_1^\rho\\ 
1&q_2^\mu&q_2^\nu&q_2^\rho\\
1&q_3^\mu&q_3^\nu&q_3^\rho\\
1&q_4^\mu&q_4^\nu&q_4^\rho\\
\end{array} \right)\epsilon_{\mu\nu\rho\sigma},\\
\nonumber
B_\sigma&=\frac 12Q^{\mu\nu}(q_1^\rho-q_2^\rho+q_3^\rho-q_4^\rho)\epsilon_{\mu\nu\rho\sigma}=\\
&={\tilde q}_{1\sigma}-{\tilde q}_{2\sigma}+{\tilde q}_{3\sigma}-{\tilde q}_{4\sigma}.\label{B_as_q_prime}
\end{align}
\end{subequations}
The operator \(V_\sigma\) is {\itshape normal}. For, by the Leibniz rule, the 
commutator of the real with 
the imaginary part can be computed as $1/2$
the sum of the determinants of the 
$4$ by $4$ matrices appearing in the last equation, where the $2$nd, 
or the $3$rd, or the $4$th column are replaced by the commutators of 
their entries with $B_\sigma$. Thus the first term, for 
instance, will give
\begin{equation}
\mathrm{det}
\left(
\begin{array}{cccc}
 1 & [q_1^\mu, B_\sigma]&q^\nu_1&q^\rho_1  \\
 1 & [q_2^\mu, B_\sigma]&q^\nu_2&q^\rho_2  \\
 1 & [q_3^\mu, B_\sigma]&q^\nu_3&q^\rho_3  \\
 1 & [q_4^\mu, B_\sigma]&q^\nu_4&q^\rho_4  \\
\end{array} \right)\epsilon_{\mu\nu\rho\sigma}.
\end{equation}
For each \(\sigma\) the term \([q_j^\mu,B_\sigma]\epsilon_{\mu\nu\rho\sigma}\) 
takes the form 
\[
(-1)^j[q_j^\mu,{\tilde q}_{j\sigma}]\epsilon_{\mu\nu\rho\sigma}=
(-1)^j{\delta^\mu}_\sigma\epsilon_{\mu\nu\rho\sigma}=0.
\]
Analogous computations can be performed for the other columns, so that
our operator is normal.

It follows that, as operators affiliated to $\mathcal E ^{\otimes _{\mathcal 
Z} 4}$, our $V_\sigma$ fulfill \
\begin{equation}
\sum_\sigma V_\sigma^* V_\sigma=\sum_\sigma(A_\sigma^2+B_\sigma^2)\geqslant
\sum_\sigma B_\sigma^2.
\end{equation}
Now, the \({\tilde q}\)s obey the relations
\begin{equation}\label{eq:qprime_rels}
[{\tilde q}_\sigma,{\tilde q}_\rho]=iQ^{-1}_{\sigma\rho},
\end{equation}
so that we see by equation \eqref{B_as_q_prime} that the operators \(B_\sigma\)
obey the commutation relations
\begin{equation}
[B_\sigma,B_\rho]=i4Q^{-1}_{\sigma\rho}.
\end{equation}
Now, \eqref{eq:hodge_Q} shows that \(Q^{-1}\) is related to \(Q\) by the exchange of 
\(\vec e\) and \(\vec m\), up to a change of sign;
therefore the arguments of \cite{Doplicher:1994tu} can be repeated to give 
\[
\sum_\sigma B_\sigma^2\geqslant 8.
\]
We conclude that the square Euclidean length of the three-volume operator
vector is bounded below by \(8\) in Planck units.

It is worth noting that, however, the spectrum of the time component, namely 
of the space 
3-volume
operator, is, as we will show now, the whole complex plane.

That operator can be expressed, in our formalism, as
\begin{align}\nonumber
d \vec q \wedge d \vec q \wedge d \vec q & \equiv \mathrm{det}
\left(
\begin{array}{cccc}
 1 & q_1^1&q_1^2& q_1^3  \\
 1 & q_2^1&q_2^2& q_2^3  \\
 1 & q_3^1&q_3^2& q_3^3  \\
 1 & q_4^1&q_4^2& q_4^3  \\
\end{array}
\right) + (i/2) \sum_{j=1}^{3} m_j \sum_{k=1}^{4} (-1)^{k+1} q_{k}^{j}=\\
&= A_0 + i B_0.
\end{align}
This is a {\itshape normal operator}, whose spectrum will 
contain the spectrum of the 
image in any 
representation; it will then suffice to show that
its spectrum is the whole complex 
plane in the 
irreducible representation \eqref{basic_rep}, where
\begin{equation}
q^1 = P \otimes I, q^2 = I \otimes Q, q^3 = I \otimes P,
\vec  m = (1,0,0);
\end{equation}
so that, in that representation, we have
\begin{equation}
d \vec q \wedge d \vec q \wedge d \vec q = \mathrm{det}
\left(
\begin{array}{cccc}
 1 & P_1 \otimes I& I \otimes Q_1 & I \otimes P_1  \\
  \cdot&\cdot &\cdot &\cdot \\
  \cdot&\cdot &\cdot &\cdot \\
 1 & P_4 \otimes I& I \otimes Q_4 & I \otimes P_4
\end{array}
\right) + (i/2) \sum_{k=1}^{4} (-1)^{k+1} P_{k} \otimes I.
\end{equation}
Furthermore, if we 
develop the determinant in terms of the determinants of the $2 \times 2$ 
minors from the two first columns times the determinants of the 
complementary minors, we find 
\begin{equation}
d \vec q \wedge d \vec q \wedge d \vec q = 
(1/4)\sum \epsilon _{jklm}(P_k - P_j ) \otimes M_{l,m} + (i/2) \sum_{k=1}^{4} 
(-1)^{k+1} P_{k} \otimes I.
\end{equation}
This operator is affiliated to the tensor product of the commutative 
C*-Algebra generated by $P_1 , \dots, P_4 $ with all bounded operators. We can 
evaluate this operator at a point $p = (p_1 , \dots, p_4 )$ of the joint 
spectrum of $P_1 , \dots, P_4 $; re-expressing, as in the previous 
subsection, the $M_{l,m}$ in terms of the independent generators of 
two copies of $SU(2)$, we obtain
\begin{equation}\label{eq:is_that}
\mu  I \otimes  L_{\xi _{+} } ^{+} + \lambda I \otimes 
L_{\xi _{-} } ^{-} + i\eta I,
\end{equation}
where $\xi _{+}, \xi _{-} $ are two unit vectors in three space, and $\mu, 
\lambda, \eta$ are real numbers, all functions of $p$, such that
\begin{gather*}
\mu     \xi_+      = (1/2)(p_3 - p_2 + p_4 - p_1, p_3 - p_2 + p_4 - p_1, p_4 - p_3 
+  p_2 - p_1)  \equiv (u, u, w);\\ \lambda  \xi_- = (1/2)(p_4 
  - p_3 + p_2 - p_1, p_1 - p_3 + p_4 - p_2, p_1 - p_3 + p_4 - p_2) \equiv 
  (w, v, v); \\
\eta (p) = (1/2) \sum_{k=1}^{4} (-1)^{k+1} p_{k} = - 
  (1/2)(p_4 - p_3 + p_2 - p_1) = - w; 
\end{gather*}
 where, as $p = (p_1, \dots, p_4)$ vary in $\mathbb{R} ^{4}$, $(u, v, w)$ span $\mathbb{R} ^{3}$. Since the 
  joint spectrum of $L_{\xi _{+} } ^{+}, L_{\xi _{-} } ^{-}$ includes all 
  pairs of relative integers, we have that the spectrum of our operators 
  includes the set 
\[  
(2u^2 + w^2)^{1/2} \mathbb{Z} + (2v^2 + 
  w^2)^{1/2} \mathbb{Z} - i w, 
\] 
for all possible choices of the real 
  numbers $u, v, w$.  From these explicit expressions, one sees that the 
  last operator has a spectrum which, as $p$ varies in $\mathbb{R} ^{4}$, spans 
  the whole complex plane, and is pure Lebesgue by \eqref{eq:is_that}.

\subsection{The area operators}
The {\itshape area operators} can be discussed separately as the {\itshape 
space-space area operator} and the {\itshape space-time area operator}, 
respectively given by the square roots of the sum of the square moduli of
 
\begin{gather*}
dq^{j} \wedge dq^{k};\\
dq^{j} \wedge dq^{0};
\end{gather*}
now, 
\begin{gather*}
\epsilon_{ljk}dq^{j} dq^{k} = \\
\epsilon_{ljk} (I \otimes q^{j} - 
q^{j} \otimes I) (I \otimes q^{k} -
q^{k} \otimes I) = \\
\epsilon_{ljk} (I \otimes q^j \otimes q^k -  q^j \otimes I \otimes q^k 
+  q^j \otimes q^k \otimes I) + i m_l I \otimes I \otimes I
\end{gather*}
which is a normal operator since $ \vec m $ is central; the sum over 
$l = 1,2,3$ of the square moduli is then bounded below by $(\vec m)^2$, 
which is bounded below by $I$ due to the Quantum Conditions.

Quite similarly,
\begin{align*}
dq^{j} \wedge dq^{0} = &(I \otimes q^j  -  q^j \otimes I)(I \otimes q^0 -  q^0 \otimes I) +\\
 &- (I \otimes q^0  -  q^0 \otimes I)(I \otimes q^j -  q^j \otimes I) =  \\
= &(I \otimes q^j \otimes q^0 -  q^j \otimes I \otimes q^0 +  q^j \otimes q^0 \otimes I)+\\
& - (I \otimes q^0 \otimes q^j -  q^0\otimes I \otimes q^j +  q^0 \otimes q^j \otimes I) + 
i e_j  I \otimes I\otimes I
\end{align*}
are normal operators whose sum of the square moduli is bounded below by $ 
(\vec e)^2$, which is bounded below by $I$ due to the Quantum 
Conditions.

Note that here too normalcy of our operators depends in an essential
way upon the fact that the commutators
of the coordinates are central.

Thus both the space-space and the space-time area operators are bounded 
below by $1$ in Planck units.

\centerline{\Large ***}
\bigskip

\noindent Finally we would like to point out that 
we might also have calculated the ``dual'' length, area, 3- and 4-volume operators, with $q$ replaced by $\tilde q=Q^{-1}q$. These yield the contributions to the metric form discussed in section 2. 
Note that in the case of the highest rank, we find that both $V$ and its dual yield the same contribution to the form, since 
\begin{equation}
d{\tilde q} \wedge d{\tilde q} \wedge d{\tilde q} \wedge d{\tilde q} = 
\eta dq \wedge dq \wedge dq 
\wedge dq,
\end{equation}
and $\eta$ is the element of $\mathcal Z$ equal to plus or minus one on 
$\Sigma_{+}$ or $\Sigma_{-}$ respectively. Hence $\eta$ disappears in the 
square.

\section{Conclusions}

We applied concepts of the universal differential calculus to define several geometric entities on the model of quantum spacetime introduced in \cite{Doplicher:1994tu}. We showed that they can be interpreted in terms of independent events, the underlying classical picture being a characterisation of simplices by their vertices. This makes it possible to represent these entities by operators on a Hilbert space, and we analysed their spectra.

We found that these operators are normal, and that their spectra have an interesting structure, which matches with the general expectations described in the Introduction.
Let us summarise our findings (where, as above, we adopt Planck units): 
\begin{enumerate}
\item We discussed in the Introduction the {\itshape Euclidean and 
Minkowskian Distance Operators} between two independent events. We 
proved that, while the second has pure Lebesgue spectrum, the first is 
bounded below by a constant of order $1$, {\itshape despite the fact that the model is fully Lorentz 
invariant}.
\item The area operator $dq^{\mu}\wedge dq^\nu$ can be split into a spatial and a spacetime part. For both the sum of the absolute squares of the components is bounded below by the unit operator.
\item The spectrum of the space component of the 3 volume is the full complex plane; the sum of the absolute values of the 4 components of the 3 volume operator is, however, bounded below by \(8\).
\item The 4 volume operator has pure point spectrum
\begin{equation}\label{eq:spec_V}
\text{spec}_{pp}(V)=
S=\pm 2 +\mathbb{Z}a_+a_-+
i\left(\mathbb{Z}a_+ +\mathbb{Z}a_-\right).
\end{equation}
where
\[
a_\pm=\sqrt{5\pm 2\sqrt 5},
\]
so that, of course,
\begin{equation}
\text{spec}(V)=\overline{\text{spec}_{pp}(V)}=
\pm 2 +\mathbb{Z}\sqrt 5+i\mathbb R.
\end{equation}
Figure 1 shows the points in \(\text{spec}_{pp}(V)\) with small real part, 
when the integer coefficients of \(a_+,a_-\) have sum of the moduli not 
exceeding \(3\). The left and right sides of each pair of close columns
refers to the \(-\) and \(+\) sign in equation \ref{eq:spec_V},
and larger dots refer to smaller absolute values of the coefficients. 

The translates of \(S\) by real integers form a ring, so that
finite tensor powers and direct sums of copies of \(V\) fulfil
\[
\text{spec}_{pp}(V\otimes\dotsm\otimes V),\;\;
\text{spec}_{pp}(V\oplus\dotsm\oplus V)\subset\text{spec}_{pp}(V)+\mathbb Z.
\]
The combinations of \(1,a_-,a_+,a_-a_+\) with rational coefficients form
a field. Presently we have no interpretation of these facts.

\begin{center}\input{code}

{\footnotesize
Fig.1 Generation of the pure point spectrum of the 4-volume operator.
}

\end{center}
\end{enumerate}

We have seen that the appropriate mathematical frame to establish these results
is a combination of the universal differential calculus on one side and the
relevant C*-algebraic structures on the other.

This mathematical frame allowed us to recognise the Hochschild boundary as
a Hodge dual of the absolute differential. This duality arises
from a ``quantum pairing'' which vanishes exactly in the commutative case.  
Nevertheless
we used this pairing to introduce a metric which generalises 
the classical metric, and which would
allow us to define noncommutative analogues of an action. In its 
present form, however, this approach is valid only in a flat background. 

Moreover, applying this formalism to projective modules, we gave a definition
of parallel transport in terms of connections, and we introduced a linear
homomorphism from the algebra to the endomorphisms of our right module, whose
algebraic curvature is related to the geometric curvature. This makes it possible to introduce 
{\itshape local} gauge invariant quantities; in the case of our 
quantum spacetime, this homomorphism maps the coordinates to the covariant 
coordinates  of \cite{Madore:2000en}.

\bigskip
\noindent{\bfseries Acknowledgements.} We are pleased to thank Joachim Cuntz 
and Jochen Zahn for helpful discussions and comments.

\footnotesize
\bibliographystyle{utphys}
\bibliography{bibfile}

\end{document}